 \documentclass[conference]{IEEEtran}
\usepackage{tikz}
\usepackage[utf8]{inputenc}

\usepackage{amsmath} \usepackage{amsthm} \usepackage{amsfonts} \usepackage{amssymb} 
\usepackage{epstopdf}
\usepackage{graphicx}
\input{xypic}
\usepackage{bbm}

\usepackage{enumerate}


\newtheorem{theorem}{Theorem}
\newtheorem{preposition}{Proposition}
\newtheorem{lem}{Lemma}

\newtheorem{definition}{Definition}

\theoremstyle{definition}
\newtheorem{example}{Example}

\newtheorem*{prob*}{Problem}

\theoremstyle{remark}
\newtheorem{remark}{Remark}

\global\long\def\ZZ{\mathbb{Z}}
\global\long\def\EE{\mathbb{E}}

\global\long\def\FF{\mathbb{F}}

%
\usepackage{mathrsfs}
\onecolumn

\begin{document}

\title{ New Sufficient Conditions for Multiple-Access Channel with Correlated Sources}
\author{
 \IEEEauthorblockN{Mohsen Heidari}
  \IEEEauthorblockA{EECS Department\\University of Michigan\\ Ann Arbor,USA \\
    Email: mohsenhd@umich.edu} 

\and
 \IEEEauthorblockN{Farhad Shirani}
  \IEEEauthorblockA{EECS Department\\University of Michigan\\ Ann Arbor,USA \\
    Email: fshirani@umich.edu }

  \and
  \IEEEauthorblockN{S. Sandeep Pradhan}
  \IEEEauthorblockA{EECS Department\\University of Michigan\\ Ann Arbor,USA \\
    Email: pradhanv@umich.edu}
}
\IEEEoverridecommandlockouts

\maketitle

\begin{abstract}
The problem of three-user Multiple-Access Channel (MAC) with correlated sources is investigated. An extension to the Cover-El Gamal-Salehi (CES) scheme is introduced. We use a combination of this scheme with linear codes and propose a new coding strategy. We derive new sufficient conditions to transmit correlated sources reliably. We consider an example of three-user MAC with binary inputs. Using this example, we show strict improvements over the CES scheme.
\end{abstract}

\section{Introduction}
\IEEEPARstart{T}{he} separation principle of Shannon plays a fundamental role to reinforce the notion of modularity. This in turn allows separate development of source and channel code design. However,  as shown by Shannon \cite{Shannon}, the separation does not generalize to multi-terminal communications. For instance,  this phenomenon was observed in many-to-one communications involving transmission of correlated sources over MAC \cite{CES}. 

In the problem of MAC with correlated sources, there are multiple transmitters, each observing a source correlated to others. The transmitters do not communicate with each other and wish to send their observations via a MAC to a central receiver. The receiver reconstructs the sources losslessly. 
The separate coding approach involves a source coding part and a channel coding part.
In the channel coding part, Ahlswede \cite{Ahlswede-MAC} and Liao \cite{Liao} studied the case where the transmitters have independent information and derived the capacity region for channel coding over MAC. In the source coding part, the distributed source coding problem was studied in which transmitters can communicate to the receiver error-free.  Slepian and Wolf  showed that lossless reproduction of the sources is possible with rates close to the joint entropy \cite{Slepian-Wolf_dist}.

Due to suboptimality of the separation based strategies, the joint source-channel coding approach has been of great interest. The CES scheme introduced in \cite{CES}, is a generalization of the results in  \cite{Ahlswede-MAC} and \cite{Slepian-Wolf_MAC}. Using this scheme a single-letter characterization of the set of sources that can be reliably transmitted was derived.  It was shown that this scheme strictly improves upon the previously known strategies. However, Dueck \cite{Dueck} proved that this approach only gives a sufficient condition and not a necessary one. The joint source-channel coding problem is well studied in other settings such as: source coding with side information via a MAC \cite{Ahlswede-MAC-With-Side-info}, broadcast channels with correlated sources \cite{Han-Costa-BC} and interference channels \cite{Salehi-Kurtas-IS}.

Recently, structured codes were used to design coding strategies for joint source-channel coding problems, \cite{Pradhan-Choi, Choi-Pradhan, Nazer_Gasper_Comp_MAC, Arun_comp_over_MAC_ISIT13, transversal}. A graph-theoretic framework was introduced in  \cite{Pradhan-Choi, Choi-Pradhan} to improve the joint source-channel coding schemes both in the MAC and the broadcast channel.

In this work, we study the three-user MAC with correlated sources. We first extend the CES scheme to this problem and derive an achievable rate region. As shown in \cite{CES}, this coding strategy improves upon separate source-channel coding techniques. This is done by choosing the codewords such that they are statistically dependent on the distribution of the sources. We observe that further improvements are possible when the sources impose an algebraic structure. One example is  when one of the sources is the modulo sum of the other two. In this scenario, a structured coding strategy is needed for the codebooks to match with the structure of the sources. With this intuition, we use linear codes in the extension of the CES scheme to derive a new achievable region. Through an example, we show strict improvements over the extension of the CES scheme.

The rest of this paper is as follows: In section \ref{sec: priliminaries}, we provide the notations, definitions and the problem statement. In Section \ref{sec: CES_extension}, an extension of the CES scheme is discussed. A new coding strategy based on linear codes is provided in Section \ref{sec: proposed scheme}. The improvements over the CES scheme are discussed in Section \ref{sec: improvements over CES}. Section \ref{sec: conclusion} concludes the paper. 
\section{Priliminaries and Problem Statement}\label{sec: priliminaries}

\subsection{Notations}
In this paper, random variables are denoted using capital letters such as $X,Y$, and their realizations are shown using lower case letters such as $x,y$, respectively.  Vectors are shown using lowercase bold letters such as $\mathbf{x}, \mathbf{y}$. Sequences of number are also represented by bold letters.
 Calligraphic letters are used to denote sets such as $\mathcal{X}, \mathcal{Y}$. For any set $\mathcal{A}$, let $S_{\mathcal{A}}=\{S_a\}_{a\in \mathcal{A}}$. If $\mathcal{A}=\emptyset$, then $S_{\mathcal{A}}=\emptyset$.
  As a shorthand, we sometimes denote a triple $(s_1,s_2,s_3)$ by $\underline{s}$. We also denote a triple of sequences $(\mathbf{s}_1,\mathbf{s}_2,\mathbf{s}_3)$ by $\underline{\mathbf{s}}$. 
By $\FF_q$, we denote the field of integers modulo-$q$, where $q$ is a prime number.  

\subsection{Three-User MAC with Correlated Sources}
 Consider a MAC with conditional Probability Mass Function (PMF) $p(y|x_1,x_2,x_3)$, input alphabets $\mathcal{X}_j, j=1,2,3$ and output alphabet $\mathcal{Y}$. Suppose $(S_1, S_2,S_3)$ represent three sources with joint distribution $p(s_1,s_2,s_3)$.  After observing $\mathbf{S}^n_j$, the $j$th transmitter encodes it and sends the encoder's output to the channel. Upon receiving $\mathbf{Y}^n$ from the channel, the decoder wishes to reconstruct the sources losslessly. A code for this setup consists of three encoding functions $f_j: \mathcal{S}^n_j \rightarrow \mathcal{X}_j, j=1,2,3,$ and a decoding function $g: \mathcal{Y}^n \rightarrow \mathcal{S}^n_1 \times \mathcal{S}^n_2 \times \mathcal{S}^n_3.$

\begin{definition}
The source $(S_1,S_2,S_3)\sim p(s_1,s_2,s_3)$ can be \textit{reliably transmitted} over the MAC $p(y|x_1,x_2,x_3)$, if for any $\epsilon >0$, there exist  encoding functions $f_1,f_2,f_3$ and a decoding function $g$ such that 
$$P\{g(Y^n) \neq (S_1^n,S_2^n, S_3^n| X^n_i=f_i(S_i^n), i=1,2,3\}\leq \epsilon.$$
\end{definition}

\subsection{Common part}
To define the common parts between the sources, we use the notion given in \cite{Witsenhausen}.
\begin{definition}\label{def: common part}
Consider random variables $S_j, j=1,2, \dots, m$. $W$ is defined as the \textit{common part} among these random variables by finding maximum positive integer $k$ for which there exist functions $$f_j: \mathcal{S}_j \rightarrow \{1,2,\dots, k\}, \quad j=1,2,\dots , m$$ with $P\{f_j(S_j)=i\}>0$ for all $ i\in \{1,2,\dots, k\}$ and $j\in \{1,2,\dots, m\}$ such that $W=f_j(S_j), j=1,2,.., m $, with probability one. 
\end{definition}

\section{A Three-User Extension of the  CES Scheme}\label{sec: CES_extension}
In this section, we first review the CES scheme for the problem of two-user MAC with correlated sources. Then we introduce an extension to the scheme in the three-user case.
Consider a MAC with conditional PMF $p(y|x_1,x_2)$. Let $S_1$ and $S_2$ be two correlated sources and $W$ be the common part between them as defined in Definition \ref{def: common part}.  In the CES scheme, first the common part $W$ is calculated at each encoder. Since both encoders have access to $W$, they can fully cooperate to encode it (as if it is done by a centralized encoder).  Next at each transmitter, each source is encoded using a codebook that is ``super-imposed" on the common codebook. 

It is shown in \cite{CES} that using this scheme, reliable transmission of $S_1$ and $S_2$ is possible if the following holds
\begin{align*}
H(S_1|S_2) & \leq I(X_1;Y|X_2, S_2, U),\\
H(S_2|S_1) & \leq I(X_2;Y|X_1, S_1, U),\\
H(S_1,S_2|W) & \leq I(X_1X_2;Y|W, U),\\
H(S_1, S_2)  & \leq I(X_1 X_2;Y),
\end{align*} 
 where  
$p(s_1,s_2,u,x_1,x_2, y)=p(s_1,s_2) p(u)p(x_1|s_1,u)p(x_2|s_2,u)p(y|x_1,x_2).$

We use the above argument to extend the CES scheme for sending correlated sources over a three-user MAC. Consider the sources $S_1,S_2,S_3$. We use Definition \ref{def: common part} to construct four different common parts among the sources. Let $W_{ij}$ be the common part of $S_i, S_j$. For more convenience, we denote the common part of $S_i$ and $S_j$, either by $W_{ij}$ or $W_{ji}$ (we simply drop the condition $j>i$, as it is understood that $W_{ij}=W_{ji}$).  Lastly, $W_{123}$ is the common part of $S_1,S_2$ and $S_3$.  

By observing $S_i$ at the $i$th transmitter, three common parts can be calculated, $W_{123}$ and $W_{ij}, j\neq i$. The three-user extension of CES involves three layers of coding. In the first layer $W_{123}$ is encoded at each encoders. Next, based on the output of the first layer, the $W_{ij}$'s are encoded.  Finally, based on the output of the first and the second layers, $S_1,S_2$ and $S_3$ are encoded.

The following preposition determines sufficient conditions for which correlated sources can be transmitted using this scheme. 

\begin{preposition}\label{prep: CES_three_user}
The source $(S_1,S_2,S_3)\sim p(s_1,s_2,s_3)$ can be reliably transmitted over a MAC with conditional probability $p(y|x_1x_2x_3)$, if for distinct $i,j,k\in \{1,2,3\}$ and any $\mathcal{B}\subseteq \{12,13,23\}$ the following holds:
\begin{align*}
H(S_i|S_jS_k)&\leq I(X_i;Y|S_jS_kX_jX_k U_{123} U_{12}U_{13} U_{23})\\
H(S_iS_j|S_kW_\mathcal{B})&\leq I(X_iX_j;Y|S_k W_\mathcal{B} U_{123}U_{ik}U_{jk}U_\mathcal{B}X_k)\\
H(S_1S_2S_3|W_{123} W_{\mathcal{B}})&\leq I(X_1X_2X_3;Y|W_{123}W_{\mathcal{B}}U_{123}U_{\mathcal{B}})\\
H(S_1S_2S_3)&\leq I(X_1X_2X_3;Y),
\end{align*}
where $U_{ij}=U_{ji}$ and
\begin{align}\label{eq: p_sux}
p(\underline{s},& \underline{x}, u_{123},u_{12},u_{13},u_{23})=p(\underline{s})p(u_{123})[\prod_{b\in \{12,13,23\}}p(u_b|w_{b}u_{123})] \cdot [\prod_{\substack{i, j, k \in \{1,2,3\}\\ j<k}} p(x_i|s_iu_{123}u_{ij}u_{ik})].
\end{align}
\end{preposition}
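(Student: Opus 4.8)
The plan is to prove achievability by a three-layer superposition random-coding scheme whose codebook structure mirrors the factorization in~\eqref{eq: p_sux}, combined with joint-typicality decoding and a union-bound error analysis over the patterns of erroneously decoded sources. Fix a distribution factoring as in~\eqref{eq: p_sux} and a small $\delta>0$. In the first (innermost) layer, for each typical realization $\mathbf{w}_{123}$ of the global common part I generate one codeword $\mathbf{u}_{123}$ with symbols drawn i.i.d.\ $\sim p(u_{123})$. In the second layer, for each pair $b\in\{12,13,23\}$ and each typical $\mathbf{w}_b$ I superimpose a codeword $\mathbf{u}_b$ generated according to $\prod_t p(u_{b,t}\mid w_{b,t},u_{123,t})$, so that the pairwise codewords are statistically tied to the pairwise common parts. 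In the third layer, for each source sequence $\mathbf{s}_i$ I generate $\mathbf{x}_i$ according to $\prod_t p(x_{i,t}\mid s_{i,t},u_{123,t},u_{ij,t},u_{ik,t})$, superimposed on the three relevant lower-layer codewords.

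Transmitter $i$, upon observing $\mathbf{s}_i$, computes $W_{123}$ and $W_{ij},W_{ik}$ (all functions of $\mathbf{s}_i$ by Definition~\ref{def: common part}), selects the corresponding layer-one and layer-two codewords, and sends $\mathbf{x}_i(\mathbf{s}_i)$; since the defining functions of the common parts agree across transmitters with probability one, the encoders select a consistent set of $\mathbf{u}$-codewords without communicating. The decoder searches for the unique triple $(\hat{\mathbf{s}}_1,\hat{\mathbf{s}}_2,\hat{\mathbf{s}}_3)$ whose induced codewords are jointly $\delta$-typical with $\mathbf{y}$, declaring an error otherwise. After discarding the vanishing event that the true sequences are atypical, it remains to bound the probability that some competing triple is jointly typical.

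The key observation driving the error analysis is that when a candidate differs from the truth only in a subset of the sources, the remaining correct sources fix some of the common parts, so that the associated lower-layer codewords coincide with the transmitted ones and carry no fresh randomness. Enumerating the error events by the set of erroneous sources, and within the two- and three-source cases by the subset $\mathcal{B}$ of pairwise common parts that happen to be preserved, yields exactly the four families of inequalities in the statement: a single erroneous source preserves all common parts, giving the first inequality; an erroneous pair $(i,j)$ preserves $W_{123},W_{ik},W_{jk}$ and possibly $W_{ij}$ when $ij\in\mathcal{B}$, giving the second; three erroneous sources preserve only $W_{123}$ together with those in $\mathcal{B}$, giving the third; and three erroneous sources preserving nothing give the last. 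In each case the conditional entropy on the left counts the source uncertainty to be resolved, while the conditional mutual information on the right is the effective channel after removing the shared codewords, and the standard conditional packing bound shows the corresponding error probability vanishes under the stated inequality.

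\textbf{Main obstacle.} The principal difficulty is the bookkeeping: for every error pattern I must identify precisely which $\mathbf{u}$-codewords are shared with the transmitted ones and which are generated independently, and then invoke the conditional joint-typicality bound with the matching conditioning set. Keeping the left-hand entropies and right-hand mutual informations synchronized with the actual shared-versus-new codeword structure—especially the correct role of the parameter $\mathcal{B}$—is where the argument is most delicate, since a miscount would either shrink the region or render a bound invalid. The superposition structure further requires conditional typicality and a Markov-lemma argument to ensure the layered codewords remain jointly typical with $\mathbf{y}$.
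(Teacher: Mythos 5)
Your proposal matches the paper's proof essentially step for step: the same three-layer superposition codebook tied to $W_{123}$, the $W_{ij}$'s, and the sources, the same joint-typicality decoder, and a union bound over error patterns classified by which sources and which common parts are preserved (the paper defers this last part to ``a standard argument as in \cite{CES}'', which is exactly the bookkeeping you describe). No substantive difference in approach.
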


\begin{proof}[Outline of the proof]
Let the random variables $X_1,X_2,X_3$, $U_{123}, U_{12},U_{13}$ and $U_{23}$ be distributed according to the above theorem. Let the $n$-length sequence $\mathbf{s}_i$ be a realization of the source $S_i$, where $i=1,2,3$.
\paragraph*{\textbf{Codebook Generation}}
For each $\mathbf{w}_{123}\in \mathcal{W}_{123}$ randomly generate a sequence $\mathbf{u}_{123}$ according to the PMF of $U_{123}$. Index them by $u_{123}(\mathbf{w}_{123})$. For each $\mathbf{u}_{123}$ and $\mathbf{w}_{b}, b \in\{12,13,23\}$ randomly generate  a sequence according to $p(u_{b}|w_{b}u_{123})$. Index them by $u_{b}(\mathbf{w}_{b}, \mathbf{u}_{123})$.

For each $\mathbf{s}_i$, first find the corresponding sequences of the common parts $\mathbf{w}_{123}, \mathbf{w}_{ij}$ and $\mathbf{w}_{ik}$, where $i,j,k \in \{1,2,3\}$ are distinct. Next find the corresponding sequences  $u_{123}(\mathbf{w}_{123}), u_{ij}(\mathbf{w}_{ij}, \mathbf{u}_{123})$ and $u_{ik}(\mathbf{w}_{ik}, \mathbf{u}_{123})$ as generated above. Lastly generate a sequence $\mathbf{x}_i$ randomly and independently according to $p(x_i|s_iu_{123}u_{ij}u_{ik})$. For shorthand we denote such sequence by $x_i(\mathbf{s}_i, \mathbf{u}_{123},\mathbf{u}_{ij},\mathbf{u}_{ik})$.

\paragraph*{\textbf{Encoding}}
Upon observing the output $\mathbf{s}_i$ of the source, the $i$th transmitter first calculates the common part sequences $\mathbf{w}_{123}, \mathbf{w}_{ij}$ and $\mathbf{w}_{ik}$. Then at the first stage it finds $u_{123}(\mathbf{w}_{123})$. At the second stage, it finds $u_{ij}(\mathbf{w}_{ij}, \mathbf{u}_{123})$ and $u_{ik}(\mathbf{w}_{ik}, \mathbf{u}_{123})$. Lastly, at the third stage, it sends $x_i(\mathbf{s}_i, \mathbf{u}_{123},\mathbf{u}_{ij},\mathbf{u}_{ik})$.

\paragraph*{\textbf{Decoding}}
Upon receiving $\mathbf{y}$ from the channel, the decoder finds $\underline{\tilde{\mathbf{s}}}=(\tilde{\mathbf{s}}_1,\tilde{\mathbf{s}}_2,\tilde{\mathbf{s}}_3)$ such that 
\begin{align*}
(\underline{\tilde{\mathbf{s}}}, \tilde{\mathbf{u}}_{123}, \tilde{\mathbf{u}}_{12}, \tilde{\mathbf{u}}_{13}, \tilde{\mathbf{u}}_{23}, \tilde{\mathbf{x}}_1,\tilde{\mathbf{x}}_2,\tilde{\mathbf{x}}_3,\mathbf{y})\in A_{\epsilon}^{(n)}(\underline{S},U_{123},U_{12},U_{13},U_{23},X_1,X_2,X_3,Y)
\end{align*}

where $ \tilde{\mathbf{u}}_{123}=u_{123}(\tilde{\mathbf{w}}_{123}),\tilde{\mathbf{u}}_{ij}=u_{ij}(\tilde{\mathbf{w}}_{ij}, \tilde{\mathbf{u}}_{123})$. Note $\tilde{\mathbf{w}}_{123}, \tilde{\mathbf{w}}_{ij}$ are the corresponding common part sequences of $\tilde{\mathbf{s}}_1,\tilde{\mathbf{s}}_2,\tilde{\mathbf{s}}_3$.

A decoding error will be occurred, if no unique $(\tilde{\mathbf{s}}_1,\tilde{\mathbf{s}}_2,\tilde{\mathbf{s}}_3)$ is found. Using a standard argument as in \cite{CES}, it can be shown that the probability of error approaches zero, if the conditions in Preposition \ref{prep: CES_three_user} are satisfied.

\end{proof}

\section{New Sufficient Conditions}\label{sec: proposed scheme}
In this section, new sufficient conditions for transmission of correlated sources are derived. We enhance the previous scheme using linear codes. 

%

The common parts used in CES are defined using univariate functions as in Definition \ref{def: common part}. In the case of more than two sources, the  notion of common parts can be extended using bivariate functions as in \cite{Arun-IC}. The following example provides a triple of sources with a bivariate common part.

 \begin{example}\label{ex: bivariate}
Let $S_1,S_2,S_3$ be three binary sources, where $S_1$ is independent of $S_2$ and $S_3=S_1 \oplus_2 S_2$, with probability one. Here, $S_3$ is a bivariate common part of $S_1$ and $S_2$. However, there is no univariate common parts among the sources.
 \end{example}
 

Next, we present a linear coding scheme for the above example. Select a linear code with a generator matrix $\mathbf{G}$ chosen randomly and uniformly on $\FF_2$.  The $i$th transmitter encodes $S^n_i$  using this linear code. Since $S_3=S_1\oplus_2 S_2$, using this approach $X_3=X_1\oplus_2 X_2$, with probability one. In this case, $X_1$ and $X_2$ are independent and uniform. In contrast, using randomly generated unstructured codes as in the extension of CES, the equality can not hold (unless the encoders are trivial). More precisely, in the CES scheme, given $S_1,S_2,S_3$ the random variables $X_1,X_2,X_3$ are mutually independent. Hence, one can conclude that for all valid joint distributions for CES, with high probability, $X_3 \neq X_1\oplus_2X_2$ . This assertion is discussed in more detail in the proof of Lemma \ref{lem: neigh of gamma* is not achievable by CES}.

We use the intuition behind the argument above and propose a new coding strategy in which a combination of linear codes and the CES scheme is used. We define a new class of common parts. This class of common parts are linked with our understanding of bivariate common information [18]. The new common part is called a \textit{$q$-additive common part}. The common part consists of a vector of random variables $(T_1,T_2,T_3)$. Here, $T_i$ is available at the $i$th encoder. In contrast to the univariate common parts in the CES scheme, these three random variables are not equal. Rather, each of them is a linear combination of the other two. This linear structure can be exploited using structured codes. The next definition formalizes this notion. 

\begin{definition}
For a prime number $q$, we say that $(T_1,T_2, T_3)$ is a \textit{$q$-additive common part} of $(S_1,S_2,S_3)$, if there exist functions $f_1, f_2,f_3$ such that  with probability one 1) $T_i=f_i(S_i)$, 2) $T_3=T_1 \oplus_q T_2$, 3) $T_i$ are nontrivial random variables.  
\end{definition}


In the following Theorem, we derive sufficient conditions for transmission of correlated sources. We will show in Section \ref{sec: improvements over CES} that this leads to enlarging the class of correlated sources that can be reliably transmitted.

\begin{theorem}\label{them: achievable-rate-for -proposed scheme}
The source $(S_1,S_2,S_3)$ can be reliably transmitted over a MAC with conditional PMF $p(y|x_1,x_2,x_3)$ if for any distinct $i,j,k \in \{1,2,3\}$ and for any $\mathcal{A}\subseteq \{1,2,3\}, \mathcal{B}\subseteq \{12,13, 23\}$ the followings hold:
\begin{align}\label{eq: transmittable bound 1}
%
H(S_i|S_jS_k)&\leq I(X_i;Y|S_j S_kU_{123}U_{12}U_{13}U_{23}V_1V_2V_3 X_jX_k)\\
H(S_iS_j|S_k W_{\mathcal{B}}T_\mathcal{A})&\leq I(X_iX_j;Y|S_k W_{\mathcal{B}} U_{123}U_{ik}U_{jk}U_{\mathcal{B}} T_\mathcal{A} V_{k}V_\mathcal{A}X_k)\\
H(S_iS_jS_k|W_{123}W_\mathcal{B}T_\mathcal{A})&\leq I(X_iX_jX_k;Y|W_{123}W_\mathcal{B}U_{123}U_\mathcal{B}T_\mathcal{A}V_\mathcal{A})\\
H(S_iS_jS_k|T_\mathcal{A})&\leq I(X_iX_jX_k;Y|T_\mathcal{A} V_\mathcal{A})\label{eq: transmittable bound end}
\end{align}
where 1)~$(T_1,T_2,T_3)$ is a $q$-additive common part of the sources for a prime $q$, 2) $p(u_{123}u_{12}u_{13}u_{23}|\underline{s})$ is the same as in (\ref{eq: p_sux}), 3) the Markov chain $U_{123}U_{\{12,13,23\}} \leftrightarrow S_1S_2S_3\leftrightarrow V_1V_2V_3$ holds, 4) $V_3=V_1\oplus_q V_2$ with probability one and $p(v_1,v_2)=\frac{1}{q^2}$, 
$$
5) ~p(\underline{x}|u_{123}u_{12}u_{13}u_{23},\underline{v},\underline{s})= \prod_{\substack{i, j, k \in \{1,2,3\}\\ j<k}} p(x_i|s_iu_{123}u_{ij}u_{ik}v_i).
$$
\end{theorem}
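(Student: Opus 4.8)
The plan is to prove the theorem by a random coding argument that layers a nested linear (coset) code, matched to the $q$-additive common part, on top of the CES-style superposition codebooks already employed in Proposition \ref{prep: CES_three_user}. \textbf{Codebook generation.} First I would reuse the construction of Proposition \ref{prep: CES_three_user} to produce the superposition codewords $\mathbf{u}_{123}(\mathbf{w}_{123})$ and $\mathbf{u}_b(\mathbf{w}_b,\mathbf{u}_{123})$ carrying the univariate common parts. The new ingredient is the codebook carrying $V_1,V_2,V_3$. Since $T_3=T_1\oplus_q T_2$ and $V_3=V_1\oplus_q V_2$, I would draw a single generator matrix $\mathbf{G}$ uniformly at random over $\FF_q$ together with independent uniform dithers, and let each encoder map its additive common part $\mathbf{t}_i$ to a codeword $\mathbf{v}_i$ of the resulting coset code. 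The crucial point, exactly as in Example \ref{ex: bivariate}, is that linearity forces $\mathbf{v}_3=\mathbf{v}_1\oplus_q\mathbf{v}_2$ with probability one whenever $\mathbf{t}_3=\mathbf{t}_1\oplus_q\mathbf{t}_2$, so the algebraic structure of the sources is reproduced by the codewords; the channel input $\mathbf{x}_i$ is then drawn according to $p(x_i\mid s_iu_{123}u_{ij}u_{ik}v_i)$, consistent with factorization~5 and the Markov chain in condition~3 of the statement.

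\textbf{Encoding and decoding.} Each transmitter computes its common parts, selects the superposition and linear codewords as above, and transmits $\mathbf{x}_i$. The decoder searches for the unique triple $(\tilde{\mathbf{s}}_1,\tilde{\mathbf{s}}_2,\tilde{\mathbf{s}}_3)$ whose induced auxiliary and input codewords $(\tilde{\mathbf{u}},\tilde{\mathbf{v}},\tilde{\mathbf{x}})$ are jointly typical with $\mathbf{y}$, mirroring the decoder of Proposition \ref{prep: CES_three_user}. I would then partition the error event according to which sources are decoded incorrectly and, for the incorrect ones, whether their univariate common parts (indexed by $\mathcal{B}$) and additive common parts (indexed by $\mathcal{A}$) still agree with the truth. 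Each class of error contributes one of the four inequalities: the entropy of the misdecoded sources conditioned on the correctly recovered sources, the common parts $W_{\mathcal{B}}$, and the additive parts $T_{\mathcal{A}}$ must not exceed the mutual information the channel supplies given the correspondingly correct auxiliary codewords, which is precisely the content of (\ref{eq: transmittable bound 1})--(\ref{eq: transmittable bound end}).

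\textbf{Main obstacle.} The hard part will be the error events that touch the linear codebook, because the constraint $\mathbf{v}_3=\mathbf{v}_1\oplus_q\mathbf{v}_2$ makes the three sequences $\mathbf{v}_i$ statistically dependent, so the standard union bound over independently generated codewords (which suffices in Proposition \ref{prep: CES_three_user}) no longer applies. I would control these events by exploiting the pairwise independence of the random coset code: for any fixed nonzero perturbation of the misdecoded additive common parts, the induced shift of $(\mathbf{v}_1,\mathbf{v}_2)$ is uniform over $\FF_q$, which is enough to carry through a union bound whose exponent is governed by the number of confusable cosets rather than by all codeword tuples. This is exactly where the linear structure simultaneously restricts the admissible joint distributions (conditions~3--5) and produces the gain that allows us to condition on $T_{\mathcal{A}}$ and $V_{\mathcal{A}}$ in (\ref{eq: transmittable bound end}), so that the additive common part is recovered even when every source is misdecoded, thereby enlarging the achievable set beyond Proposition \ref{prep: CES_three_user}.
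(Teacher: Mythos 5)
Your proposal is correct and takes essentially the same route as the paper: a random linear coset code matched to the $q$-additive common part is superimposed on the CES layers, the decoder is a joint-typicality decoder, the error events are partitioned according to which sources (and hence which common parts $W_{\mathcal{B}}$, $T_{\mathcal{A}}$) are misdecoded, and the dependence among $\mathbf{v}_1,\mathbf{v}_2,\mathbf{v}_3$ is handled via the pairwise independence of the random linear map (the paper's Lemma \ref{lem: prob of sG}); the paper in fact only writes out the special case $T_i=S_i$ with no univariate common part and asserts the general layered construction you describe. One small correction: the dithers cannot all be independent --- you must set $\mathbf{b}_3=\mathbf{b}_1\oplus_q\mathbf{b}_2$ (as the paper does) for the identity $\mathbf{v}_3=\mathbf{v}_1\oplus_q\mathbf{v}_2$ to hold with probability one.
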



\begin{remark}
Suppose the source $(S_1,S_2,S_3)$ has no univariate common part. Consider the set of such sources that can be transmitted either using linear codes or using the CES scheme.  This set is included in the set of sources that satisfy (\ref{eq: transmittable bound 1})-(\ref{eq: transmittable bound end}).
\end{remark}
%
%
%

\begin{proof}[Outline of the proof]
Consider only the special case where $S_3=S_1\oplus_q S_2$ and there is no univariate common part. In this situation, set $T_i=S_i$ and fix probability mass function $p(x_i|s_i v_i)$, where $i=1,2,3$. Generate $\mathbf{b}_1$, $\mathbf{b}_2 \in \FF_q^n$ and an $n \times n$ matrix $\mathbf{G}$ with elements selected randomly, uniformly and independently from $\FF_q$. Set $\mathbf{b}_3=\mathbf{b}_1\oplus_q \textbf{b}_2$. 
\paragraph*{\textbf{Codebook Generation}}
  For each sequence $\mathbf{s}_i$, define $v_i(\mathbf{s}_i)=\mathbf{s}_i \mathbf{G}\oplus_q \mathbf{b}_i,$ where all the additions and multiplications are modulo-$q$. For each sequence $\textbf{s}_i, \mathbf{v}_i \in \FF_q^n$ independently generate $\textbf{x}_i$ according to $\prod_{j=1}^n p(x_{i,j}|s_{i,j} v_{i,j})$. Index them by $x_i(\mathbf{s}_i, \mathbf{v}_i)$. 

\paragraph*{\textbf{Encoding}}
Given the sequence  $\mathbf{s}_i$, encoder $i$ first finds  $v_i(\mathbf{s}_i)$, then sends $x_i(\mathbf{s}_i, v_i(\mathbf{s}_i))$. 

\paragraph*{\textbf{Decoding}}
Upon receiving $\mathbf{y}$ from the channel, the decoder finds $\tilde{\mathbf{s}}_1, \tilde{\mathbf{s}}_2$ and $\tilde{\mathbf{s}}_3$ such that 
$
(\underline{\mathbf{\tilde{s}}},\mathbf{v}_1(\tilde{\mathbf{s}}_1),\mathbf{v}_2(\tilde{\mathbf{s}}_2),\mathbf{v}_3(\tilde{\mathbf{s}}_3), \tilde{\mathbf{x}}_1, \tilde{\mathbf{x}}_2, \tilde{\mathbf{x}}_3, \mathbf{y})\in A_{\epsilon}^{(n)},
$
where $\tilde{\mathbf{x}}_i=x_i(\tilde{\mathbf{s}}_i,\mathbf{v}_i(\tilde{\mathbf{s}}_i))$. 
An error is declared, if no unique $(\tilde{\mathbf{s}}_1,\tilde{\mathbf{s}}_2,\tilde{\mathbf{s}}_3)$ were found. 

We show in Appendix \ref{sec: proof of them lin_ces} that the probability of error approaches zero as $n\rightarrow \infty$, if (\ref{eq: transmittable bound 1})-(\ref{eq: transmittable bound end}) are satisfied. For a general $(S_1,S_2,S_3)$ the proof follows by the above argument and the proof of Preposition \ref{prep: CES_three_user}.

%

\end{proof}

\section{Improvements over the CES Scheme} \label{sec: improvements over CES}
In this section, through an example, we show that Theorem \ref{them: achievable-rate-for -proposed scheme} strictly enlarges the class of correlated sources that can be transmitted reliably using linear codes or the CES scheme. We introduce a setup consisting of a source triple and a MAC. 
\begin{example}\label{ex: CES is suboptimal}
Consider binary sources $S_1, S_2, S_3$, where  $S_1$ and $S_3$ are independent and $S_3=S_1\oplus_2 S_2$. Let $S_1 \sim Be(\sigma)$ and $ S_3 \sim Be(\gamma)$, where $\sigma, \gamma \in [0,1]$.  Consider the MAC in Figure \ref{fig: Exp1}, where the input alphabets are binary.  $N$ is independent of other random variables and is distributed according to Table \ref{tab: N}, where $0 \leq \delta \leq \frac{1}{2}, \delta \neq \frac{1}{4}$.
\begin{table}[h]
\caption {Distribution of $N$}\label{tab: N}
\begin{center}
\begin{tabular}{c|c|c|c|c}
N & 0 & 1 & 2 & 3\\
\hline
$P_N$ & $\frac{1}{2}-\delta$ & $\frac{1}{2}$ & $\delta$ & $0$
\end{tabular}
\end{center}
\end{table}

\begin{figure}[h]
\centering
\includegraphics[scale=0.8]{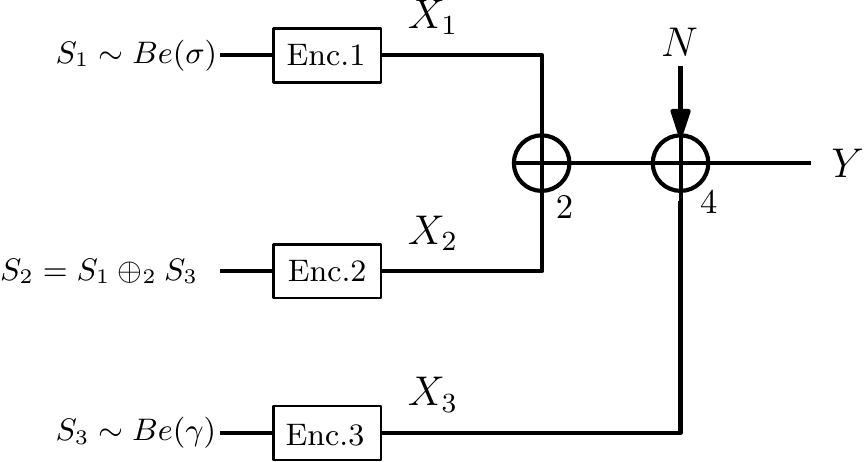}
\caption{The diagram the setup introduced in Example \ref{ex: CES is suboptimal}. Note the input alphabets of this MAC are restricted to $\{0,1\}$.}
\label{fig: Exp1}
\end{figure}
\end{example}

For this setup, we show that there exist a $\sigma$ and $\gamma$ whose corresponding sources in Example \ref{ex: CES is suboptimal} cannot be transmitted reliably using the CES scheme. However,  based on Theorem \ref{them: achievable-rate-for -proposed scheme}, this sources can be reliably transmitted.

\begin{remark}\label{rem: gamma}
Let $\sigma=0$. In this case, $S_1=0$ and $S_2=S_3$,  with probability one. From Proposition \ref{prep: CES_three_user}, $(S_1,S_2,S_3)$ can be transmitted using the CES strategy, as long as $h(\gamma)\leq 2-H(N)$. 
\end{remark}

We find a $\gamma$, in Remark \ref{rem: gamma}, such that $h(\gamma)=2-H(N)$. Since $2-H(N)\leq 1$,  we can calculate  $h^{-1}(2-H(N))$. This gives two candidates for $\gamma$. We select the one that is less than $1/2$ and denote it by $\gamma^*$.

 By Remark \ref{rem: gamma}, the source $(S_1,S_2,S_3)$ with $\sigma=0$ and  $\gamma=\gamma^*$ can be transmitted using the CES scheme. However, we argue that for small enough $\epsilon$ the source $(S_1,S_2,S_3)$ with $\sigma=\epsilon$ and  $\gamma=\gamma^*-\epsilon$ cannot be transmitted using this scheme (Lemma \ref{lem: neigh of gamma* is not achievable by CES}). Whereas, from Theorem \ref{them: achievable-rate-for -proposed scheme}, this source can be transmitted reliably (Lemma \ref{lem: neigh of gamma* is achievable}).

\begin{lem}\label{lem: neigh of gamma* is not achievable by CES}
Consider the setup in Example \ref{ex: CES is suboptimal}. $\exists ~\epsilon >0$ such that for any  $\sigma >0$ and $\gamma\geq \gamma^* - \epsilon$, the source $(S_1,S_2,S_3)$ corresponding to $\sigma$ and $\gamma$ cannot be transmitted using the three-user CES strategy. 
\end{lem}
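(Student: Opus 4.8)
**Plan for proving Lemma \ref{lem: neigh of gamma* is not achievable by CES}.**

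The plan is to argue by a continuity/contradiction strategy anchored at the boundary point $(\sigma,\gamma)=(0,\gamma^*)$, where Remark \ref{rem: gamma} tells us $h(\gamma^*)=2-H(N)$, i.e. the CES conditions hold with equality in the governing bound. The key observation, flagged in the text preceding Example \ref{ex: bivariate}, is that under any valid CES joint distribution (\ref{eq: p_sux}) the inputs $X_1,X_2,X_3$ are \emph{conditionally mutually independent given the sources and auxiliaries}, so the algebraic relation $X_3=X_1\oplus_2 X_2$ cannot hold (except for degenerate, trivial encoders). The first step I would take is to make this precise: for any nontrivial CES distribution, the induced input distribution is bounded away (in total variation, or in the relevant mutual-information functionals) from the degenerate linear distribution that achieves equality at $\gamma^*$. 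This forces a strict gap in at least one of the CES inequalities of Proposition \ref{prep: CES_three_user} evaluated at $\sigma=0,\gamma=\gamma^*$.

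The second step is to quantify that gap. I would evaluate the relevant CES mutual-information term, most naturally the ``all-users'' bound $H(S_1S_2S_3)\leq I(X_1X_2X_3;Y)$ or its conditional refinements, as a function of the CES input distribution over the restricted binary alphabet and the fixed channel determined by $N$ in Table \ref{tab: N}. Because the channel is fixed and the alphabets are finite, $I(X_1X_2X_3;Y)$ is a continuous function on a compact simplex of feasible CES distributions; I would show its supremum at $(\sigma,\gamma)=(0,\gamma^*)$ is \emph{strictly} less than the value $2-H(N)=h(\gamma^*)$ that a linear (non-CES) scheme attains, precisely because CES cannot realize $X_3=X_1\oplus_2X_2$. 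Call this strict deficit $\Delta_0>0$.

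The third step is a perturbation/continuity argument to promote the strict gap at the boundary to a neighborhood. The source entropies $H(S_i|\cdot)$ and joint entropies depend continuously on $(\sigma,\gamma)$, and the supremum over feasible CES distributions of each right-hand mutual-information term is upper semicontinuous in $(\sigma,\gamma)$ (again by compactness of the simplex and continuity of the functionals). Hence there is $\epsilon>0$ such that for all $\sigma>0$ and all $\gamma\geq\gamma^*-\epsilon$, the left-hand entropy still exceeds the best achievable right-hand term by at least $\Delta_0/2>0$, so the necessary CES condition in Proposition \ref{prep: CES_three_user} is violated and reliable CES transmission is impossible.

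The main obstacle I anticipate is the first step: rigorously certifying that \emph{every} admissible CES distribution incurs a strictly positive deficit $\Delta_0$, uniformly, rather than merely that the single linear distribution is outside the CES family. Because CES allows arbitrary auxiliaries $U_{123},U_{12},U_{13},U_{23}$ and conditional kernels $p(x_i\mid s_i,\cdot)$, one must show that no clever choice of these can asymptotically mimic the $X_3=X_1\oplus_2X_2$ correlation well enough to close the gap to $h(\gamma^*)$. I would handle this by exploiting the conditional-independence structure of (\ref{eq: p_sux}) to bound the achievable correlation between $X_3$ and $X_1\oplus_2 X_2$ away from one, then feeding this into an explicit lower bound on $h(\gamma^*)-I(X_1X_2X_3;Y)$ computed from the specific channel and the distribution of $N$; the special value $\delta\neq\tfrac14$ in Table \ref{tab: N} is presumably what guarantees this deficit is strictly positive.
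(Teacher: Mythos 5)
Your overall strategy --- reduce to the sum bound $H(S_1S_2S_3)\le I(X_1X_2X_3;Y)=H(Y)-H(N)$, show that CES distributions cannot make $Y$ uniform because they cannot realize $X_3=X_1\oplus_2 X_2$, and then pass to a neighborhood by continuity --- is the same route the paper takes. However, there is a genuine flaw in where you anchor the argument. You propose to show that the supremum of $I(X_1X_2X_3;Y)$ over CES distributions \emph{at} $(\sigma,\gamma)=(0,\gamma^*)$ is strictly below $2-H(N)$. That is false, and it contradicts Remark \ref{rem: gamma}: at $\sigma=0$ we have $S_2=S_3$ with probability one, so $W_{23}$ is a nontrivial common part, the auxiliary $U_{23}$ may be an arbitrary function of it, and encoders $2$ and $3$ can cooperate fully; in particular they can produce $X_2=X_3$ uniform and (taking $X_1$ constant) make $Y$ uniform, attaining $I(X_1X_2X_3;Y)=2-H(N)$ exactly. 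The conditional independence of the $X_i$ given the sources \emph{and the auxiliaries} does not forbid the unconditional relation $X_3=X_1\oplus_2X_2$ when a common part is available to the auxiliaries. The whole point of the lemma is that the feasible CES set changes discontinuously at $\sigma=0$: for every $\sigma>0$ there is no univariate common part, the auxiliaries degenerate to common randomness independent of the sources, and only then does the deficit appear. Consequently your Step 3 (upper semicontinuity of the supremum in $(\sigma,\gamma)$ around the anchor) cannot close the argument, because the quantity you need to be bounded away from $2-H(N)$ at the anchor point is in fact equal to $2-H(N)$ there.

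The paper instead works directly with $\sigma>0$ throughout. It shows $Y$ is uniform over $\ZZ_4$ if and only if $X_3=X_1\oplus_2X_2$ and $X_3$ is uniform (this is where $\delta\neq\tfrac14$ enters, via an explicit computation of the distribution of $Y$), and then uses the Markov chain $X_3\leftrightarrow S_3\leftrightarrow(X_1,X_2)$, valid for CES distributions when no common parts exist, to conclude that $X_3=X_1\oplus_2X_2$ forces $X_3$ to be a function of $S_3$; since $\gamma<\tfrac12$, such an $X_3$ cannot be uniform, so $H(Y)<2$ for every admissible distribution. This also disposes of what you flag as your ``main obstacle'': the uniform strict deficit then follows from compactness of the set of admissible $p(\underline{x}|\underline{s})$ and continuity of $H(Y)$, rather than from a quantitative bound on the achievable correlation between $X_3$ and $X_1\oplus_2X_2$. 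To repair your proof, restate Step 2 so that the strict deficit is established for all $\sigma>0$ (exploiting the absence of common parts), and make the Markov-chain argument explicit.
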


\begin{lem}\label{lem: neigh of gamma* is achievable}
 $\exists ~ \epsilon' >0$ such that for any  $ \sigma \leq \epsilon'$ and $|\gamma-\gamma^*|\leq \epsilon'$, the source $(S_1,S_2,S_3)$ corresponding to $\sigma$ and $\gamma$, as in Example \ref{ex: CES is suboptimal}, can be transmitted.
\end{lem}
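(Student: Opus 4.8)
The plan is to verify that the source $(S_1,S_2,S_3)$ corresponding to $\sigma\leq\epsilon'$ and $|\gamma-\gamma^*|\leq\epsilon'$ satisfies the sufficient conditions \eqref{eq: transmittable bound 1}--\eqref{eq: transmittable bound end} of Theorem \ref{them: achievable-rate-for -proposed scheme} for a suitable choice of the auxiliary random variables. Since $S_3=S_1\oplus_2 S_2$ and the sources have no univariate common part, I would set $q=2$ and take the $2$-additive common part to be $(T_1,T_2,T_3)=(S_1,S_2,S_3)$, so that $\mathcal{A}$ ranges over subsets of $\{1,2,3\}$ and the auxiliary variables $U_{123},U_{12},U_{13},U_{23}$ can be taken trivial (since there is no univariate common information to exploit). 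The remaining freedom is in the choice of the structured auxiliaries $(V_1,V_2,V_3)$ with $V_3=V_1\oplus_2 V_2$ and uniform $(V_1,V_2)$, and in the conditional laws $p(x_i\mid s_i v_i)$. Guided by the linear-coding intuition described before the theorem, the natural choice is $X_i=$ a deterministic (or lightly randomized) function matching $V_i$, arranged so that $X_3=X_1\oplus_2 X_2$ holds with the inputs $X_1,X_2$ independent and uniform on $\{0,1\}$; this is precisely the configuration the unstructured CES scheme cannot realize, which is what Lemma \ref{lem: neigh of gamma* is not achievable by CES} exploits against CES.

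Next I would evaluate both sides of each inequality at the boundary point $\sigma=0$, $\gamma=\gamma^*$. At this point $S_1=0$ and $S_2=S_3$ with probability one, so the entropy terms on the left collapse: the joint entropy $H(S_1S_2S_3)=h(\gamma^*)=2-H(N)$, and the conditional entropies simplify accordingly. On the right, with $X_3=X_1\oplus_2 X_2$ and $X_1,X_2$ independent uniform, the mutual information terms can be computed explicitly from the channel in Figure \ref{fig: Exp1} and the distribution of $N$ in Table \ref{tab: N}. The key computation is to show that $I(X_1X_2X_3;Y\mid T_\mathcal{A}V_\mathcal{A})=2-H(N)$ for the outer bound \eqref{eq: transmittable bound end} when $\mathcal{A}=\emptyset$, so that the binding constraint is met with equality, and that all the other constraints \eqref{eq: transmittable bound 1}--\eqref{eq: transmittable bound 3} hold with strict slack at the boundary point. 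Because $V_3=V_1\oplus_2 V_2$ forces the structured alignment, the linear scheme achieves the full $2-H(N)$ that the unstructured scheme provably forfeits.

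Finally I would invoke a continuity argument. All the entropy terms $H(\cdot)$ are continuous functions of the source distribution, hence of $(\sigma,\gamma)$; the mutual-information terms $I(\cdot)$ are likewise continuous in the induced joint distribution $p(\underline s,\underline v,\underline x,y)$, which depends continuously on $(\sigma,\gamma)$ through the source law (the channel and the chosen conditionals $p(x_i\mid s_i v_i)$ are fixed). Since every constraint except the binding one holds with strict inequality at $(\sigma,\gamma)=(0,\gamma^*)$, these strict inequalities persist on a neighborhood. For the binding constraint, which holds with equality at the boundary, I would check that perturbing into the region $\sigma\leq\epsilon'$, $|\gamma-\gamma^*|\leq\epsilon'$ does not violate it: decreasing $\gamma$ below $\gamma^*$ strictly decreases the left-hand entropy $h(\gamma)$ while the right-hand mutual information stays at $2-H(N)$, giving slack, and one must confirm that small positive $\sigma$ together with $\gamma$ slightly above $\gamma^*$ can be absorbed by the same continuity margin. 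Choosing $\epsilon'$ small enough that all perturbations stay within the region where every inequality holds then completes the argument. The main obstacle I anticipate is the binding constraint: near equality one cannot rely on open-condition robustness alone, so I would need a careful monotonicity-plus-continuity analysis of how $h(\gamma)$ and the relevant $I(\cdot)$ terms move as $(\sigma,\gamma)$ varies, to certify that a genuinely two-dimensional neighborhood (and not merely a one-sided segment) satisfies all conditions.
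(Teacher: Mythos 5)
Your overall strategy is the same as the paper's: specialize Theorem \ref{them: achievable-rate-for -proposed scheme} with $q=2$, $(T_1,T_2,T_3)=(S_1,S_2,S_3)$, trivial $U$'s, $V_3=V_1\oplus_2 V_2$ with $(V_1,V_2)$ uniform, evaluate at $(\sigma,\gamma)=(0,\gamma^*)$, and then perturb. But there is a genuine gap in the middle step. Your claim that only the sum constraint is binding at the boundary point while ``all the other constraints hold with strict slack'' is false for the choice $X_i=V_i$. With $X_3=X_1\oplus_2 X_2$ the channel output is $Y=2X_3\oplus_4 N$, so conditioned on $X_3$ the output is independent of $(X_1,X_2)$; hence $I(X_1X_2;Y\mid X_3S_3V_3)=0$, and the constraint (\ref{eq: transmittable sources bound 2}), namely $h(\sigma)\leq I(X_1X_2;Y\mid X_3S_3V_3)$, is tight as $0\leq 0$ at $\sigma=0$ (likewise (\ref{eq: transmittable sources bound 1}) is tight: $h(\gamma^*)=2-H(N)=I(X_2X_3;Y\mid X_1S_1V_1)$; only (\ref{eq: transmittable sources bound 3}) has slack). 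Consequently your continuity argument in $(\sigma,\gamma)$ alone cannot close the proof: for \emph{any} $\sigma>0$ the left side of (\ref{eq: transmittable sources bound 2}) is $h(\sigma)>0$ while its right side remains identically $0$ under the deterministic choice, no matter how small $\sigma$ is.

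The missing idea --- which is the crux of the paper's proof --- is that one must perturb the test channel $p(\underline{x}\mid\underline{s},\underline{v})$ away from $X_i=V_i$, not just the source parameters. The paper shows that for every $\epsilon_0>0$ there is a conditional law under which $I(X_1X_2;Y\mid X_3S_3V_3)\geq\epsilon_0$ while the other three mutual informations drop only to $2-H(N)-\eta(\epsilon_0)$ with $\eta(\epsilon_0)\to 0$; only then does continuity of the left-hand entropies in $(\sigma,\gamma)$ yield a neighborhood of $(0,\gamma^*)$ on which all four inequalities hold. Your parenthetical ``(or lightly randomized)'' gestures at this, but the rest of your argument explicitly fixes the conditionals and relies on open-condition robustness, which fails here. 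You correctly flag the difficulty with $\gamma$ slightly above $\gamma^*$ in the genuinely binding constraints, but you do not resolve it, and the resolution again runs through this two-parameter trade-off between $\epsilon_0$ and $\eta(\epsilon_0)$ rather than through monotonicity in $\gamma$ alone.
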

The proof of Lemma \ref{lem: neigh of gamma* is not achievable by CES} and \ref{lem: neigh of gamma* is achievable} is given in Appendix \ref{sec: proof_neigh is not achievable by CES} and \ref{sec: proof_ neigh is achievable}, respectively.
\begin{remark}
Consider $\epsilon$ and $\epsilon'$ as in Lemma \ref{lem: neigh of gamma* is not achievable by CES} and \ref{lem: neigh of gamma* is achievable}, respectively. Take $\epsilon''=\min\{\epsilon, \epsilon'\}$. As a result of these lemmas, the source $(S_1,S_2,S_3)$ corresponding to $\sigma=\epsilon''$ and $\gamma=\gamma^*-\epsilon''$ can be transmitted reliably while it cannot be transmitted using the CES scheme.
\end{remark}

\section{Conclusion}\label{sec: conclusion}
Transmission of correlated sources over thee-user MAC was investigated in this paper and an extension of the CES strategy was presented for this problem. We characterized sufficient conditions for which reliable transmission of correlated sources is possible using this scheme. Then by proposing a new coding technique, we enlarged the set of sources that can be transmitted reliably.

\appendices
\section{Proof of Lemma \ref{lem: neigh of gamma* is not achievable by CES}}\label{sec: proof_neigh is not achievable by CES}
\begin{proof}
We first derive an outer bound for the CES scheme. Consider the fourth inequality in Preposition  \ref{prep: CES_three_user}.  Since $\sigma>0$ there is no common part. Let $U'=U_{123}U_{12}U_{13}U_{23}$. Suppose the source $(S_1,S_2,S_3)$ in Example \ref{ex: CES is suboptimal}  can be transmitted using the CES, then  the following holds
\begin{equation}\label{eq: CES outer bound 1}
h(\gamma)+h(\sigma) \leq \max_{p(u')p(\underline{x}|u'\underline{s})} I(X_1X_2 X_3;Y|U'),
\end{equation}
where $$p(\underline{s},\underline{x}, u')=p(\underline{s})p(u')p(x_1|s_1, u')p(x_2|s_2,u')p(x_3|s_3,u').$$
It can be shown that the right-hand side in (\ref{eq: CES outer bound 1}) is equivalent to 

\begin{equation}\label{eq: CES outter bound 2}
h(\gamma)+h(\sigma) \leq \max_{p(\underline{x}|\underline{s})} I(X_1X_2 X_3;Y),
\end{equation}
where $p(\underline{s},\underline{x})=p(\underline{s})p(x_1|s_1)p(x_2|s_2)p(x_3|s_3).$

Next, we argue that the right-hand side in (\ref{eq: CES outter bound 2}) is strictly less than $h(\gamma^*)=2-H(N)$. For the moment assume this argument is true. Then by the bound above, $h(\gamma)+h(\sigma)< h(\gamma^*)$. This implies that $\exists \epsilon_0>0$ such that for any $\sigma$,  $h(\gamma^*)-h(\gamma)>\epsilon_0$. Hence, as the entropy function is continuous, $\exists \epsilon>0$ such that any source with $\sigma>0$ and $\gamma\geq\gamma^*-\epsilon$ cannot be transmitted using the CES scheme.

It remains to show that the right-hand side in (\ref{eq: CES outter bound 2}) is strictly less than $2-H(N)$. Note  $I(X_1,X_2,X_3;Y)=H(Y)-H(N)$. Hence, we need to show $H(Y)<2$. We proceed by finding all the necessary and sufficient conditions on $p(x_1, x_2,x_3)$ for which $Y$ is uniform over $\ZZ_4$. Then we show that since the distributions taken for maximization in (\ref{eq: CES outter bound 2}) do not satisfy these conditions.

From Figure \ref{fig: Exp1}, $Y= (X_1\oplus_2 X_2) \oplus_4 X_3 \oplus_4 N$. Denote $X'_2=X_1\oplus_2 X_2$. Let $P(X'_2 \oplus_4 X_3=i)=q(i)$ where $i=1,2,3,4$. Since $X'_2$ and $X_3$ are binary, $q(3)=0$. Given the distribution of $N$ is Table \ref{tab: N},  the distribution of $Y$ is as follows:
\begin{align*}
P(Y=0)&=q(0)(\frac{1}{2}-\delta)+q(2)\delta, \quad\quad
P(Y=1)=q(0)\frac{1}{2}+q(1)(\frac{1}{2}-\delta)\\
P(Y=2)&=q(0)\delta+q(2) (\frac{1}{2}-\delta),\quad\quad
P(Y=3)=q(2)\frac{1}{2}+q(1)\delta
\end{align*}

Assume  $\delta \neq \frac{1}{4}$. By comparing the first and third bounds, we can show that $Y$ is uniform, if and only if $q(1)=0$ and $q(0)=q(2)=\frac{1}{2}$. Note
 \begin{align*}
 q(1)=P(X'_2=0,X_3=1)+P(X'_2=1,X_3=0)
 \end{align*}
 Therefore, $q(1)=0$ implies that  $X_3=X'_2$ with probability one. If this condition is satisfied, then $q(0)=P(X_3=0)$ and $q(2)=P(X_3=1)$. Since $q(0)=q(2)=\frac{1}{2}$ then $X_3$ is uniform over $\{0,1\}$.
To sum up, we proved that $Y$ is uniform, if and only if 1) $X_3=X_1\oplus_2 X_2$. 2) $X_3$ is uniform over $\{0,1\}$.

Note the distributions given in the CES scheme for this case satisfy the Markov chain $X_3-S_3-X_1,X_2$. Hence, we can show for these distributions, the condition  $X_3=X_1\oplus_2 X_2$ hold if and only if $X_3$ is a function of $S_3$. However, as $\gamma<1/2$, $X_3$ cannot be uniform over $\{0,1\}$. This contradicts with the second condition.
\end{proof}

\section{Proof of Lemma \ref{lem: neigh of gamma* is achievable}}\label{sec: proof_ neigh is achievable}
\begin{proof}
 For the setup in Example \ref{ex: CES is suboptimal}, the bounds given in Theorem \ref{them: achievable-rate-for -proposed scheme} are simplified to 
\begin{align}\label{eq: transmittable sources bound 1}
h(\gamma) &\leq I(X_2 X_3;Y| X_1 S_1V_1)\\\label{eq: transmittable sources bound 2}
h(\sigma) &\leq I(X_1 X_2; Y|X_3 S_3 V_3)\\\label{eq: transmittable sources bound 3}
h(\gamma)+h(\sigma)-h(\sigma * \gamma)&\leq I(X_1 X_3;Y| X_2 S_2 V_2)\\\label{eq: transmittable sources bound 4}
h(\gamma)+h(\sigma)&\leq I(X_1 X_2 X_3;Y).
\end{align}

Set $X_i=V_i, i=1,2,3$, where the distribution of these random variables are given in Theorem \ref{them: achievable-rate-for -proposed scheme}. One can verify that the source corresponding to $\sigma=0$ and $\gamma=\gamma^*$ satisfies the above inequalities and therefore can be transmitted. 

No that all the terms in (\ref{eq: transmittable sources bound 1})-(\ref{eq: transmittable sources bound 4}) are entropy functions and mutual information. Therefore, they are continuous with respect to conditional density $p(\underline{x}|\underline{s},\underline{v})$. Hence, one can show that $\forall \epsilon_0>0$, there exist a conditional density $p(\underline{x}|\underline{s},\underline{v})$  such that 
\begin{align*}
I(X_2 X_3;Y| X_1 S_1V_1)& \geq 2-H(N)-\eta(\epsilon_0)\\
I(X_1 X_2; Y|X_3 S_3 V_3)&\geq \epsilon_0\\
I(X_1 X_3;Y| X_2 S_2 V_2)&\geq 2-H(N)-\eta(\epsilon_0)\\
I(X_1 X_2 X_3;Y) &\geq 2-H(N)-\eta(\epsilon_0),
\end{align*}
where $\eta()$ is function of $\epsilon$ such that $\eta(\epsilon)\rightarrow 0$ as $\epsilon \rightarrow 0$.

Note also that the left-hand sides in (\ref{eq: transmittable sources bound 1})-(\ref{eq: transmittable sources bound 4}) are continuous in $\sigma$ and $\gamma$. Hence $\exists \epsilon'>0$ such that when $\sigma\leq \epsilon', |\gamma-\gamma^*| \leq \epsilon'$, we have 
\begin{align*}
h(\gamma)&\leq 2-H(N)-\eta(\epsilon_0)\\
h(\sigma) &\leq \epsilon_0\\
h(\gamma)+h(\sigma)-h(\sigma * \gamma)&\leq 2-H(N)-\eta(\epsilon_0)\\
h(\gamma)+h(\sigma) &\leq 2-H(N)-\eta(\epsilon_0)
\end{align*}

This implies that the source corresponding to $\sigma\leq \epsilon', \gamma\leq \gamma^*-\epsilon'$ can be transmitted reliably and the proof is complete.
\end{proof}

\section{Proof of Theorem \ref{them: achievable-rate-for -proposed scheme}}\label{sec: proof of them lin_ces}
\begin{proof}
There are two error events, $E_0$ and $E_1$. $E_0$ occurs if no $\underline{\mathbf{\tilde{s}}}$ was found. $E_1$ is declared if  $\underline{\mathbf{\tilde{s}}} \neq \underline{\mathbf{s}}$. To show that $E_0$ is small, we need the next lemma. Suppose $v_i()$ and $v_i()$ are a realization of random functions generated as in the outline of the proof of Theorem \ref{them: achievable-rate-for -proposed scheme}.

\begin{lem}\label{len: prob jth comp}
Suppose $\mathbf{s}_i, i=1,2,3$ are jointly typical with respect to $P_{\underline{\mathbf{S}}}$. Then 
$$\big(v_1(\mathbf{s}_1),v_2(\mathbf{s}_2), v_3(\mathbf{s}_3),x_1(\mathbf{s}_1,v_1(\mathbf{s}_1)),x_2(\mathbf{s}_2,v_2(\mathbf{s}_2)),x_3(\mathbf{s}_3,v_3(\mathbf{s}_3))\big)\in A_{\epsilon}^{(n)}(V_1V_2V_3X_1X_2X_3|\mathbf{s}_1\mathbf{s}_2 \mathbf{s}_3).$$

%
\end{lem}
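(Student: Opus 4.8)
The plan is to prove this as a high-probability statement over the joint randomness of $\mathbf{G}, \mathbf{b}_1, \mathbf{b}_2$ and the conditional codebook for the $\mathbf{x}_i$'s, splitting the argument into two stages: first I would control the empirical joint type of $(\underline{\mathbf{s}}, \mathbf{v}_1, \mathbf{v}_2, \mathbf{v}_3)$, and only then adjoin the channel inputs. Working in the special case of the outline (so $\mathbf{s}_3 = \mathbf{s}_1 \oplus_q \mathbf{s}_2$, $T_i = S_i$), the first observation is purely algebraic: since $\mathbf{b}_3 = \mathbf{b}_1 \oplus_q \mathbf{b}_2$ and the map $\mathbf{s}_i \mapsto \mathbf{s}_i \mathbf{G} \oplus_q \mathbf{b}_i$ is affine, linearity gives $\mathbf{v}_3(\mathbf{s}_3) = \mathbf{v}_1(\mathbf{s}_1) \oplus_q \mathbf{v}_2(\mathbf{s}_2)$ with probability one, which is exactly the constraint $V_3 = V_1 \oplus_q V_2$ required in the target distribution.

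The heart of the first stage is the randomization provided by the dithers. Because $\mathbf{b}_1, \mathbf{b}_2$ are independent and uniform on $\FF_q^n$, for \emph{every} fixed $\mathbf{G}$ the pair $(\mathbf{v}_1(\mathbf{s}_1), \mathbf{v}_2(\mathbf{s}_2))$ is a deterministic shift of $(\mathbf{b}_1, \mathbf{b}_2)$ and hence uniform on $\FF_q^{2n}$; as this conditional law does not depend on $\mathbf{G}$, averaging over $\mathbf{G}$ preserves it. Consequently the per-coordinate pairs $(V_{1,\ell}, V_{2,\ell})$, $\ell = 1,\dots,n$, are i.i.d.\ uniform on $\FF_q^2$. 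I would then fix a target letter $(a_1,a_2,a_3,c_1,c_2,c_1\oplus_q c_2)$ and note that the corresponding joint count is a sum, over the \emph{fixed} coordinate set $\{\ell : \underline{s}_\ell = \underline{a}\}$ (whose size is $\approx n\, p(\underline{a})$ by typicality of $\underline{\mathbf{s}}$), of i.i.d.\ Bernoulli$(q^{-2})$ indicators. Chebyshev's inequality together with a union bound over the finite alphabet then yields concentration of the joint type around $p(\underline{a})\,q^{-2}\,\mathbbm{1}[c_3 = c_1 \oplus_q c_2]$, i.e.\ joint typicality of $(\underline{\mathbf{s}}, \mathbf{v}_1, \mathbf{v}_2, \mathbf{v}_3)$ with probability approaching one.

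For the second stage I would invoke the conditional typicality lemma. Conditioned on the now-typical $(\underline{\mathbf{s}}, \underline{\mathbf{v}})$, the codewords $(\mathbf{x}_1,\mathbf{x}_2,\mathbf{x}_3)$ have coordinates drawn independently from $\prod_i p(x_i \mid s_{i,\ell} v_{i,\ell})$, which is precisely the conditional kernel in the factorization (5) of Theorem \ref{them: achievable-rate-for -proposed scheme}. Since each $\mathbf{x}_i$ is generated using only $(\mathbf{s}_i,\mathbf{v}_i)$, this matches the conditional independence of $X_1,X_2,X_3$ given $(\underline{S},\underline{V})$ in the target distribution, so no extra cross-dependence needs checking, and the conditional typicality lemma delivers joint typicality of the full tuple with high probability.

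I expect the main obstacle to be the second stage's prerequisite, namely verifying in the first stage that the image of a fixed typical $\underline{\mathbf{s}}$ under the dithered linear map carries the correct empirical joint statistics \emph{despite} $\mathbf{v}_3$ being deterministically tied to $\mathbf{v}_1,\mathbf{v}_2$. The resolution is the observation above that the independent dithers make $(\mathbf{v}_1,\mathbf{v}_2)$ exactly uniform and coordinatewise i.i.d., so that $\mathbf{G}$ plays no role in this lemma at all (its randomness is instead what is needed to rule out the confusion event $E_1$), reducing the claim to a routine law-of-large-numbers concentration. For a general source possessing univariate common parts, I would compose this argument with the layered construction of Proposition \ref{prep: CES_three_user}, as indicated in the outline of the proof.
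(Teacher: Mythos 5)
Your proposal is correct and is essentially the intended argument: the paper itself dismisses this lemma with ``the proof is straightforward,'' and your two-stage argument (uniformity of $(\mathbf{v}_1,\mathbf{v}_2)$ from the independent dithers regardless of $\mathbf{G}$, concentration of the joint type, then the conditional typicality lemma for the $\mathbf{x}_i$'s) supplies exactly the details being omitted, including the correct reading of the statement as holding with high probability over the code randomness rather than for every realization. Your observation that $\mathbf{G}$ is irrelevant here and only matters for the confusion event $E_1$ is also consistent with how the paper uses the randomness of $\mathbf{G}$ in its Lemma on $P\{\mathbf{s}_j\mathbf{G}=\mathbf{v}_j\}$.
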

\begin{proof}
The proof is straightforward. 
\end{proof}
As a result, the sequences $\mathbf{s}_i, \mathbf{v}_i, \mathbf{x}_i, i=1,2,3$ are jointly typical with $\mathbf{y}$ with respect to $P_{\underline{\mathbf{S}}, \underline{\mathbf{V}}, \underline{\mathbf{X}},Y}$. This implies that  $P(E_0)$ approaches $0$ as $n \rightarrow \infty$. Next, we calculate $P(E_1 \cap E_0^c)$. For a given $\underline{\mathbf{s}}\in A_{\epsilon}(\underline{S})$, using the definition of $E_1$ and the union bound we obtain,
\begin{align*}
P(E_1 \cap E_0^c|\underline{\mathbf{s}}) \leq &  \sum_{ (\underline{\mathbf{v}},\underline{\mathbf{x}})\in  A_{\epsilon}(\underline{\mathbf{V}},\underline{\mathbf{X}}|\underline{\mathbf{s}})} \mathbbm{1}\{\mathbf{v}_i=v_i(\mathbf{s}_i), \mathbf{x}_i=x_i(\mathbf{s}_i, \mathbf{v}_i), ~ i=1,2,3\} \sum_{\mathbf{y}\in A_{\epsilon}(Y|\underline{\mathbf{x}})}p(\mathbf{y}|\underline{\mathbf{x}})\\
 &\sum_{\substack{(\underline{\mathbf{\tilde{s}}},\underline{\tilde{\mathbf{v}}},\underline{\tilde{\mathbf{x}}})\in A_{\epsilon}(\underline{S},\underline{V},\underline{X}|\mathbf{y})\\ \underline{\tilde{\mathbf{s}}} \neq \underline{\mathbf{s}}}} \mathbbm{1}\{\tilde{\mathbf{v}}_j=v_j(\tilde{\mathbf{s}}_j), \tilde{\mathbf{x}}_j=x_j(\tilde{\mathbf{s}}_j, \tilde{\mathbf{v}}_j), j=1,2,3\}
\end{align*}

Taking expectation over random functions  $X_i(,)$ and $V_i()$ gives,
\begin{align}\label{eq: Pe}
p_e(\underline{\mathbf{s}})=\EE\{P(E_{1}|\underline{\mathbf{s}})\} \leq &  \sum_{(\underline{\mathbf{v}}, \underline{\mathbf{x}},\mathbf{y})\in A_{\epsilon}(\underline{V}, \underline{X},Y|\underline{\mathbf{s}})} p(\mathbf{y}|\underline{\mathbf{x}})
 \sum_{\substack{(\underline{\mathbf{\tilde{s}}},\underline{\tilde{\mathbf{v}}},\underline{\tilde{\mathbf{x}}})\in A_{\epsilon}(\underline{S},\underline{V},\underline{X}|\mathbf{y})\\ \underline{\tilde{\mathbf{s}}} \neq \underline{\mathbf{s}}}} \\ \nonumber
& P\{\mathbf{v}_l=V_l(\mathbf{s}_l), \mathbf{x}_l=X_l(\mathbf{s}_l, \mathbf{v}_l), \tilde{\mathbf{v}}_l=V_l(\tilde{\mathbf{s}}_l), \tilde{\mathbf{x}}_l=X_l(\tilde{\mathbf{s}}_l, \tilde{\mathbf{v}}_l) ~ \mbox{for } l=1,2,3\}
\end{align}

Note that $V_i()$ and $\mathbf{X}_i(~,~)$ are generated independently. So the most inner term in (\ref{eq: Pe}) is simplified to 
\begin{align}\label{eq: pp 1}
P\{\mathbf{v}_j=V_j(\mathbf{s}_j), \tilde{\mathbf{v}}_j=V_j(\tilde{\mathbf{s}}_j)~ j=1,2\} P\{\mathbf{x}_l=X_l(\mathbf{s}_l, \mathbf{v}_l), \tilde{\mathbf{x}}_l=X_l(\tilde{\mathbf{s}}_l, \tilde{\mathbf{v}}_l) ~  l=1,2,3\}.
\end{align}
Note $j=3$ is redundant because, $\mathbf{v}_3 = \mathbf{v}_1\oplus_q \mathbf{v}_2$ and $\mathbf{\tilde{v}}_3 = \mathbf{\tilde{v}}_1\oplus_q \mathbf{\tilde{v}}_2$. By definition, $V_j(\mathbf{s}_j)=\mathbf{s}_j\mathbf{G}+\mathbf{B}_j, j=1,2$, where $B_1,B_2$ are uniform and  independent of $\mathbf{G}$. Then 
\begin{align}\label{eq: PP 2}
P\{\mathbf{v}_j=\Phi_j(\mathbf{s}_j), \tilde{\mathbf{v}}_j=\Phi_j(\tilde{\mathbf{s}}_j)~ j=1,2\} = \frac{1}{q^{2n}} P\{(\tilde{\mathbf{s}}_j-\mathbf{s}_j)\mathbf{G}=\tilde{\mathbf{v}}_j-\mathbf{v}_j, ~j=1,2\}
\end{align}
The following lemma determines the above term.
\begin{lem}\label{lem: prob of sG}
Suppose elements of $\mathbf{G}$ are generated randomly and uniformly from $\FF_q$. If $\mathbf{s}_1$ or $\mathbf{s}_2$ is nonzero, the following holds: 
\begin{align*}
P\{\mathbf{s}_j\mathbf{G}=\mathbf{v}_j, ~j=1,2\}= \left\{ \begin{array}{ll}
q^{-n}\mathbbm{1}\{\mathbf{v}_j=\mathbf{0}\} , & \mbox{if}~ \mathbf{s}_j=\mathbf{0}\\
q^{-n}\mathbbm{1}\{\mathbf{v}_1 = \mathbf{v}_2\}, & \mbox{if}~ \mathbf{s}_1 \neq \mathbf{0},  \mathbf{s}_2 \neq \mathbf{0}, \mathbf{s}_1= \mathbf{s}_2.\\
q^{-2n}, & \mbox{if}~ otherwise.\\
\end{array} \right.
\end{align*}
\end{lem}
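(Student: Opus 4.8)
The plan is to exploit the fact that the linear constraints decouple across the columns of $\mathbf{G}$. Write $\mathbf{g}_1,\dots,\mathbf{g}_n\in\FF_q^n$ for the columns of $\mathbf{G}$; since every entry of $\mathbf{G}$ is drawn i.i.d.\ uniformly, these columns are mutually independent and each is uniform on $\FF_q^n$. The $k$-th coordinate of $\mathbf{s}_j\mathbf{G}$ is the inner product $\mathbf{s}_j\cdot\mathbf{g}_k$, so the event $\{\mathbf{s}_1\mathbf{G}=\mathbf{v}_1,\ \mathbf{s}_2\mathbf{G}=\mathbf{v}_2\}$ is the intersection over $k=1,\dots,n$ of the per-column events $\{\mathbf{s}_1\cdot\mathbf{g}_k=v_{1,k},\ \mathbf{s}_2\cdot\mathbf{g}_k=v_{2,k}\}$. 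By independence of the columns, $P\{\mathbf{s}_j\mathbf{G}=\mathbf{v}_j,\ j=1,2\}=\prod_{k=1}^n P\{\mathbf{s}_1\cdot\mathbf{g}=v_{1,k},\ \mathbf{s}_2\cdot\mathbf{g}=v_{2,k}\}$, where $\mathbf{g}$ is a single uniform vector on $\FF_q^n$. This reduces the whole computation to one scalar probability per column.

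For the single-column probability I would view $\phi:\FF_q^n\to\FF_q^2$, $\phi(\mathbf{g})=(\mathbf{s}_1\cdot\mathbf{g},\ \mathbf{s}_2\cdot\mathbf{g})$, as an $\FF_q$-linear map whose matrix is the $2\times n$ matrix with rows $\mathbf{s}_1,\mathbf{s}_2$. A surjective linear map pushes the uniform distribution forward to the uniform distribution on its image (all fibers of a linear map have equal size), so $\phi(\mathbf{g})$ is uniform on $\mathrm{im}\,\phi$, a subspace of size $q^{r}$ with $r=\mathrm{rank}$ of that matrix. Hence the per-column probability equals $q^{-r}$ when $(v_{1,k},v_{2,k})\in\mathrm{im}\,\phi$ and $0$ otherwise. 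It then remains only to identify $r$ and $\mathrm{im}\,\phi$ in each case and take the product over $k$.

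When exactly one vector vanishes, say $\mathbf{s}_1=\mathbf{0}$ and $\mathbf{s}_2\neq\mathbf{0}$, then $r=1$ and $\mathrm{im}\,\phi=\{0\}\times\FF_q$; the per-column term is $q^{-1}\mathbbm{1}\{v_{1,k}=0\}$, and multiplying over $k$ gives $q^{-n}\mathbbm{1}\{\mathbf{v}_1=\mathbf{0}\}$, matching the first line (the case $\mathbf{s}_2=\mathbf{0}$ being symmetric). When both are nonzero and $\mathbf{s}_1=\mathbf{s}_2$, again $r=1$ but now $\mathrm{im}\,\phi$ is the diagonal $\{(c,c):c\in\FF_q\}$; the per-column term is $q^{-1}\mathbbm{1}\{v_{1,k}=v_{2,k}\}$ and the product is $q^{-n}\mathbbm{1}\{\mathbf{v}_1=\mathbf{v}_2\}$, matching the second line. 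In the remaining case $\mathbf{s}_1,\mathbf{s}_2$ are linearly independent, so $r=2$, $\phi$ is surjective, every per-column factor is $q^{-2}$, and the product is $q^{-2n}$, matching the third line.

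The one point that needs care is the identification of the ``otherwise'' case with linear independence: over a general $\FF_q$ two distinct nonzero vectors may still be scalar multiples of one another (rank $1$), in which case the correct per-column factor is $q^{-1}$ supported on an affine line, not $q^{-2}$. I would therefore read the third case precisely as ``$\mathbf{s}_1,\mathbf{s}_2$ linearly independent.'' For the binary field $q=2$ used throughout the example of Section \ref{sec: improvements over CES} this subtlety is absent, since the only nonzero scalar is $1$ and hence any two distinct nonzero vectors are automatically independent; the three listed cases are then exhaustive and the formula holds verbatim. The only non-elementary ingredient is the pushforward fact, which is itself immediate, so I expect the main effort to be nothing more than bookkeeping the rank in each regime.
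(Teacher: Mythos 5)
Your proof is correct, and it takes a genuinely different (dual) decomposition from the paper's. The paper expands $\mathbf{s}_j\mathbf{G}=\sum_{i} s_{ji}\mathbf{G}_i$ over the \emph{rows} $\mathbf{G}_i$ of $\mathbf{G}$, argues that $\mathbf{s}_j\mathbf{G}$ is uniform on $\FF_q^n$ whenever $\mathbf{s}_j\neq\mathbf{0}$, and then claims independence of $\mathbf{s}_1\mathbf{G}$ and $\mathbf{s}_2\mathbf{G}$ when $\mathbf{s}_1\neq\mathbf{s}_2$, using the fact that adding an independent uniform variable preserves uniformity and independence. You instead factor the event over the \emph{columns} of $\mathbf{G}$ and reduce everything to the rank of the $2\times n$ matrix with rows $\mathbf{s}_1,\mathbf{s}_2$, via the pushforward of the uniform measure under a linear map. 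The two routes are comparable in length, but yours is more precise: it makes the relevant dichotomy explicitly one of linear dependence rather than of equality. That precision pays off, because the caveat you raise is a real one. For $q>2$ and $\mathbf{s}_2=c\,\mathbf{s}_1$ with $c\notin\{0,1\}$, the vectors are distinct and nonzero yet $\mathbf{s}_2\mathbf{G}=c\,(\mathbf{s}_1\mathbf{G})$ is a deterministic function of $\mathbf{s}_1\mathbf{G}$, so the paper's independence claim fails and the ``otherwise'' line of the lemma does not hold as stated; the correct value there is $q^{-n}\mathbbm{1}\{\mathbf{v}_2=c\,\mathbf{v}_1\}$. This is not merely cosmetic: in Case 3 of the proof of Theorem \ref{them: achievable-rate-for -proposed scheme} the lemma is applied to difference vectors satisfying $\tilde{\mathbf{s}}_1-\mathbf{s}_1=-(\tilde{\mathbf{s}}_2-\mathbf{s}_2)$, which for $q>2$ falls exactly in this proportional-but-unequal regime, so the lemma's second case should really be stated for linearly dependent pairs (with the indicator $\mathbbm{1}\{\mathbf{v}_2=c\,\mathbf{v}_1\}$), and the third case reserved for linearly independent pairs, as you propose. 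For $q=2$ the statement is correct verbatim.
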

\begin{proof}[Outline of the proof]
We can write $\mathbf{s}_j\mathbf{G}=\sum_{i=1}^n \mathbf{s}_{ji}\mathbf{G}_i$, where $\mathbf{s}_{ji}$ is the $i$th component of $\mathbf{s}_j$ and $\mathbf{G}_i$ is the $i$th row of $\mathbf{G}$. Not that $\mathbf{G}_i$ are independent random variables with uniform distribution over $\FF_q^n$. Hence, if $\mathbf{s}_j\neq \mathbf{0}$, then $\mathbf{s}_j\mathbf{G}$ is uniform over $\FF_q^n$. If $\mathbf{s}_1\neq \mathbf{s}_2$, one can show that $\mathbf{s}_1\mathbf{G}$ is independent of $\mathbf{s}_2\mathbf{G}$. The proof follows by arguing that if a random variables $X$ is independent of $Y$ and is uniform over $\FF_q$, then $X\oplus_q Y$ is also uniform over $\FF_q$ and is independent of $Y$. 
\end{proof}
Finally, we are ready to characterize the conditions in which $p_e\rightarrow 0$. We divide the last summation in (\ref{eq: Pe}) into the following cases:

\paragraph*{\bf Case 1, $\mathbf{\tilde{s}}_1 \neq \mathbf{s}_1,  \mathbf{\tilde{s}}_2= \mathbf{s}_2$}
In this case, using Lemma \ref{lem: prob of sG}, (\ref{eq: PP 2}) equals to $q^{-3n}\mathbbm{1}\{\mathbf{\tilde{v}}_2=\mathbf{v}_2\}$. Therefore, (\ref{eq: Pe}) is simplified to 
\begin{align*}
p_{e_1} (\underline{\mathbf{s}}) := &  \sum_{(\underline{\mathbf{v}}, \underline{\mathbf{x}},\mathbf{y})\in A_{\epsilon}(\underline{V}, \underline{X},Y|\underline{\mathbf{s}})} p(\mathbf{y}|\underline{\mathbf{x}})
 \sum_{\substack{(\underline{\mathbf{\tilde{s}}},\underline{\tilde{\mathbf{v}}},\underline{\tilde{\mathbf{x}}})\in A_{\epsilon}(\underline{S},\underline{V},\underline{X}|\mathbf{y})\\ \underline{\tilde{\mathbf{s}}} \neq \underline{\mathbf{s}},  \mathbf{\tilde{s}}_2= \mathbf{s}_2,  \mathbf{\tilde{v}}_2= \mathbf{v}_2 }} 
q^{-3n} P\{\mathbf{x}_l=\mathbf{X}_l(\mathbf{s}_l, \mathbf{v}_l), \tilde{\mathbf{x}}_l=\mathbf{X}_l(\tilde{\mathbf{s}}_l, \tilde{\mathbf{v}}_l) ~  l=1,2,3\}.
\end{align*} 

Note that $\mathbf{X}_l({\mathbf{s}}_l, \mathbf{{v}}_l)$ is independent of $\mathbf{X}_k(\tilde{\mathbf{s}}_k, \mathbf{\tilde{v}}_k)$, if $l \neq k$ or $\mathbf{s}_l \neq \mathbf{\tilde{s}}_l$ or $\mathbf{v}_l \neq \mathbf{\tilde{v}}_l$. Moreover, $P\{\mathbf{x}_l=\mathbf{X}_l(\mathbf{s}_l, \mathbf{v}_l)\}\approx 2^{nH(X_l|S_lV_l))}$.
As $\mathbf{s}_2=\mathbf{\tilde{s}}_2$ and $\mathbf{v}_2=\mathbf{\tilde{v}}_2$, then $X_2(\tilde{\mathbf{s}}_2, \tilde{\mathbf{v}}_2)=X_2(\mathbf{s}_2, \mathbf{v}_2)$. Therefore, 
\begin{align*}
P\{\mathbf{x}_l=X_l(\mathbf{s}_l, \mathbf{v}_l), \tilde{\mathbf{x}}_l=X_l(\tilde{\mathbf{s}}_l, \tilde{\mathbf{v}}_l) ~  l=1,2,3\}=2^{-n[2H(X_1|S_1V_1)+H(X_2|S_2V_2)+ 2H(X_3|S_3V_3)]} \mathbbm{1}\{\mathbf{\tilde{x}}_2=\mathbf{x}_2\}.
\end{align*}

Hence, we have:
\begin{align*}
p_{e_1}(\underline{\mathbf{s}})&\approx 2^{nH(\underline{V}, \underline{X}|\underline{S})} 2^{nH(S_1,V_1,X_1, S_3, V_3, X_3| Y S_2 V_2 X_2)} \frac{1}{q^{3n}}2^{-n[2H(X_1|S_1V_1)+H(X_2|S_2V_2)+2 H(X_3|S_3V_3)]}. 
\end{align*}
Note that $H(\underline{V},\underline{X}| \underline{S})=2\log_2q+\sum_{i=1}^3H(X_i|S_i,V_i)$. Therefore, $p_{e_1}\rightarrow 0$, if 
\begin{align}\label{eq: initial_inequality_case 1}
H(S_1,V_1,X_1, S_3, V_3, X_3| Y S_2 V_2 X_2)\leq \log_2q+ H(X_1|S_1V_1)+H(X_3|S_3V_3)
\end{align}
The right-hand side in the above inequality equals to $H(X_1X_3V_1V_3|S_1S_2S_3 X_2V_2)$. We simplify the left-hand side. Observe that $$H(S_1,V_1,X_1, S_3, V_3, X_3| Y S_2 V_2 X_2)=H(V_1,X_1, V_3, X_3| Y S_2 V_2 X_2)+H(S_1|S_2 \underline{V}~\underline{X}),$$
where $Y$ is removed from the second term, because conditioned on $\underline{X}$, $Y$ is independent of $S_1$. Note that
\begin{align*}
H(S_1|S_2 \underline{V}~\underline{X})&=H(S_1|S_2X_2V_2)-I(S_1;X_1V_1X_3V_3|S_2V_2X_2)=H(S_1|S_2)-I(S_1;X_1V_1X_3V_3|S_2V_2X_2).
\end{align*}
Therefore, using the above argument the inequality in (\ref{eq: initial_inequality_case 1}) is simplified to
\begin{align*}
H(S_1|S_2)&\leq I(S_1;X_1V_1X_3V_3|S_2V_2X_2)-H(V_1,X_1, V_3, X_3| Y S_2 V_2 X_2)+H(X_1X_3V_1V_3|S_1S_2S_3 X_2V_2)\\
&=I(X_1V_1X_3V_3;Y|S_2V_2X_2)=I(X_1X_3;Y|S_2V_2X_2.)
\end{align*}

\paragraph*{\bf Case 2, $\mathbf{\tilde{s}}_1 = \mathbf{s}_1,  \mathbf{\tilde{s}}_2 \neq  \mathbf{s}_2$}
A similar argument as in the first case gives $H(S_2|S_1) \leq I(X_2X_3;Y|S_1V_1X_1)$.

\paragraph*{\bf Case 3, $\mathbf{\tilde{s}}_1 \neq \mathbf{s}_1,  \mathbf{\tilde{s}}_2 \neq  \mathbf{s}_2, \mathbf{\tilde{s}}_1 \oplus_q \mathbf{\tilde{s}}_2=\mathbf{s}_1\oplus_q \mathbf{s}_2 $}

Using Lemma \ref{lem: prob of sG},
\begin{align*}
P\{\mathbf{v}_j=\Phi_j(\mathbf{s}_j), \tilde{\mathbf{v}}_j=\Phi_j(\tilde{\mathbf{s}}_j)~ j=1,2\} = q^{-3n} \mathbbm{1}\{ \mathbf{\tilde{v}}_1 \oplus_q \mathbf{\tilde{v}}_2=\mathbf{v}_1\oplus_q \mathbf{v}_2 \}
\end{align*}
Therefore, the above probability is nonzero only when $\mathbf{\tilde{v}}_3=\mathbf{v}_3$. Hence, as $\mathbf{s}_3=\mathbf{\tilde{s}}_3$, we get $X_3(\tilde{\mathbf{s}}_3, \tilde{\mathbf{v}}_3)=X_3(\mathbf{s}_3, \mathbf{v}_3)$. This implies that,
\begin{align*}
P\{\mathbf{x}_l=\mathbf{X}_l(\mathbf{s}_l, \mathbf{v}_l), \tilde{\mathbf{x}}_l=\mathbf{X}_l(\tilde{\mathbf{s}}_l, \tilde{\mathbf{v}}_l) ~  l=1,2,3\}=2^{-n[2H(X_1|S_1V_1)+2H(X_2|S_2V_2)+H(X_3|S_3V_3)]} \mathbbm{1}\{\mathbf{\tilde{x}}_3=\mathbf{x}_3\}.
\end{align*}

As a result,  (\ref{eq: Pe}), in this case, is simplified to :
\begin{align*}
p_{e_3}(\underline{\mathbf{s}}) \approx 2^{nH(S_1,V_1,X_1, S_2, V_2, X_2| Y S_3 V_3 X_3)} q^{-n} 2^{-n[H(X_1|S_1V_1)+H(X_2|S_2V_2)]}.
\end{align*} 

Therefore, $p_{e3}\rightarrow 0$, if $H(S_1,V_1,X_1, S_2, V_2, X_2| Y S_3 V_3 X_3)\leq  H(X_1,X_2,V_1,V_2|S_1S_2S_3V_3X_3)$.
Using a similar argument as in the first case, this inequality is equivalent to $ H(S_1S_2|S_3) \leq I(X_1, X_2; Y| S_3V_3X_3)$.

\paragraph*{\bf Case 4, $\mathbf{\tilde{s}}_i \neq \mathbf{s}_i,  i=1,2,3 $}
Observe that,
\begin{align*}
P\{\mathbf{v}_j=\Phi_j(\mathbf{s}_j), \tilde{\mathbf{v}}_j=\Phi_j(\tilde{\mathbf{s}}_j)~ j=1,2\} &= q^{-4n}\\
P\{\mathbf{x}_l=\mathbf{X}_l(\mathbf{s}_l, \mathbf{v}_l), \tilde{\mathbf{x}}_l=\mathbf{X}_l(\tilde{\mathbf{s}}_l, \tilde{\mathbf{v}}_l) ~  l=1,2,3\}&=2^{-2n\sum_{l=1}^3 H(X_l|S_lV_l)}.
\end{align*}
Therefore,  (\ref{eq: Pe}), in this case, is simplified to $p_{e_4}(\underline{\mathbf{s}}) \approx q^{-2n}  2^{nH(  \underline{S}, \underline{V}, \underline{X}| Y)}  2^{-n\sum_{l=1}^3 H(X_l|S_lV_l)}$. As a result, one can show that $P_{e4}\rightarrow 0$, if $H(S_1S_2S_3) \leq I(X_1X_2X_3; Y)$.

Finally, note that $P_e(\underline{\mathbf{s}}) \leq \sum_{i=1}^4 P_{ei}(\underline{\mathbf{s}}).$ Moreover, $P_{ei}(\underline{\mathbf{s}})$ depends on $\underline{\mathbf{s}}$ only through its PMF. Therefore, for any typical $\underline{\mathbf{s}}$,  $P_e$ approaches zero as $n\rightarrow \infty$, if the following bounds are satisfied:
\begin{align*}
H(S_1|S_2) &\leq I(X_1 X_3;Y| S_2 V_2 X_2)\\
H(S_2|S_1) &\leq I(X_2 X_3;Y| S_1 V_1 X_1)\\
H(S_1S_2| S_1 \oplus_q S_2) & \leq I(X_1 X_2;Y| S_1 \oplus_q S_2,  V_3 X_3)\\
H(S_1,S_2) &\leq I(X_1 X_2 X_3;Y).
\end{align*}

\end{proof}


\begin{thebibliography}{1}
\bibitem{Shannon}
C. E. Shannon, ``Two-way communication channels," in Proc. 4th
Berkeley Symp. Math. Statist. Prob., Univ. California, pp. 611-644,1961.

\bibitem{CES}
T. Cover, A.E. Gamal, M. Salehi, ``Multiple access channels with arbitrarily correlated sources," IEEE Transactions on Inf. Theory , vol.26, no.6, pp.648-657, Nov 1980.

\bibitem{Ahlswede-MAC}
R. Ahlswede, ``Multi-way communication channels,” in Proc. 2nd Int.
Symp. Information Theory, Tsahkadsor, S.S.R. Armenia, 1971, pp.
23–52.


\bibitem{Liao}
H. Liao, ``A coding theorem for multiple access communications,” in
Proc. Int. Symp. Information Theory, Asilomar, CA, 1972, 

\bibitem{Slepian-Wolf_dist}
D. Slepian and J. Wolf, “Noiseless coding of correlated information
sources,” IEEE Trans. Inf. Theory, vol. IT-19, no. 4, pp. 471–480, Jul.
1973.

\bibitem{Slepian-Wolf_MAC}
D. Slepian and J. K. Wolf, “A coding theorem for multiple-access
channels with correlated sources,” Bell Syst. Tech. J., vol. 52, pp.
1037–1076, Sep. 1973.

\bibitem{Dueck}
G. Dueck, “A note on the multiple-access channel with correlated
sources,” IEEE Trans. Inf. Theory, vol. IT-27, no. 2, pp. 232–235,
Mar. 1981.



\bibitem{Ahlswede-MAC-With-Side-info}
R. Ahlswede and T. S. Han, “On source coding with side information
via a multiple access channel and related problems in multiuser information
theory,” IEEE Trans. Inf. Theory, vol. IT-29, no. 3, pp. 396–412,
May 1983.

\bibitem{Han-Costa-BC}
T. S. Han and M. H. M. Costa, “Broadcast channels with arbitrarily
correlated sources,” IEEE Trans. Inf. Theory, vol. IT-33, no. 5, pp.
641–650, Sep. 1987.

\bibitem{Salehi-Kurtas-IS}
M. Salehi and E. Kurtas, “Interference channels with correlated
sources,” in Proc. IEEE Int. Symp. Information Theory, San Antonio,
TX, Jan. 1993, p. 208.


\bibitem{Pradhan-Choi}
S.S. Pradhan, Suhan Choi, K. Ramchandran, ``A Graph-Based Framework for Transmission of Correlated Sources Over Multiple-Access Channels,"  IEEE Trans. on Inf. Theory,  vol.53, no.12, pp.4583-4604, Dec. 2007.

\bibitem{Choi-Pradhan}
Suhan Choi, S.S. Pradhan, ``A Graph-Based Framework for Transmission of Correlated Sources Over Broadcast Channels," IEEE Trans. on Inf. Theory, vol.54, no.7, pp.2841-2856, July 2008. 

\bibitem{Nazer_Gasper_Comp_MAC}
B. A. Nazer and M. Gastpar, ``Computation over multiple-access channels", IEEE Trans. on Inf. Theory, Oct. 2007.


\bibitem{Arun_comp_over_MAC_ISIT13}
A. Padakandla and S.S. Pradhan, ``Computing sum of sources over an arbitrary multiple access channel," IEEE International Symposium on Information Theory Proceedings (ISIT), 2013 , pp.2144-2148, July 2013.

\bibitem{transversal}
M. Heidari, F. Shirani and S. Pradhan, `` Beyond group capacity in multi-terminal communications", 2015


\bibitem{Witsenhausen}
H. S.Witsenhausen, ``On sequences of pairs of dependent random variables," SIAM J. Appl. Math., vol. 28, pp. 100–113, Jan. 1975.

\bibitem{complete-version}
M. Heidari, F. Shirani, S.S. Pradhan, ``New Sufficient Conditions for Multiple-Access Channel with Correlated Sources," http://arxiv.org, 2016. 

\bibitem{Arun-IC}
A. Padakandla, A. Ghasemian Sahebi, S. Pradhan, ``Achievable rate region for 3-user interference channel based on coset codes," IEEE Trans. on Inf. Theory, vol.PP, no.99, pp.1-1.
\end{thebibliography}
\end{document}